\newcommand{\R}{\mathbb{R}}  % The real numbers.
\newcommand{\cF}{\mathcal{F}} %Free states
\newcommand{\cO}{\mathcal{O}} %Free operations
\newcommand{\cH}{\mathcal{H}} %Hilbert space
\newcommand{\cB}{\mathcal D(\mathcal{H})} %Density matrices
\newcommand{\cE}{\mathcal{E}} %Quantum channel
\newcommand{\cL}{\mathcal{L}} %Set of linear operators
\def\Tr{\operatorname{Tr}}
\def\>{\rangle}
\def\<{\langle}
\def\mE{\mathcal{E}}
\def\mF{\mathcal{F}}
\def\mL{\mathcal{L}}
   \newcommand{\spa}{\text{span}}
\renewcommand{\qedsymbol}{\nobreak \ifvmode \relax \else
	\ifdim \lastskip<1.5em \hskip-\lastskip \hskip1.5em plus0em
	minus0.5em \fi \nobreak \vrule height0.75em width0.5em
	depth0.25em\fi}
\renewcommand{\geq}{\geqslant}
\renewcommand{\leq}{\leqslant}
\newtheorem{theorem}{Theorem}
\newtheorem{corollary}{Corollary}
\newtheorem{lemma}{Lemma}
\newtheorem{definition}{Definition}
\theoremstyle{remark}
\theoremstyle{definition}
\newcommand{\bea}{\begin{eqnarray}}
\newcommand{\eea}{\end{eqnarray}}
\newcommand{\be}{\begin{equation}}
\newcommand{\ee}{\end{equation}}
\newcommand{\ba}{\begin{equation}\begin{aligned}}
\newcommand{\ea}{\end{aligned}\end{equation}}
\def\be{\begin{equation}}
\def\ee{\end{equation}}
\def\ba{\begin{align}}
\def\ea{\end{align}}
\newcommand{\mH}{\mathcal{H}}
\newcommand{\mK}{\mathcal{K}}
\newcommand{\lr}{\rangle\langle}
\newcommand{\ra}{\rangle}
\begin{document}
\title{Quantifying the Imaginarity of Quantum Mechanics}
\author{Alexander Hickey}
\author{Gilad Gour}
\affiliation{Department of Mathematics and Statistics, University of Calgary, 2500 University Drive NW, Calgary, Alberta, Canada T2N 1N4}
\date{\today}

\begin{abstract}
The use of imaginary numbers in modelling quantum mechanical systems encompasses the wave-like nature of quantum states. Here we introduce a resource theoretic framework for imaginarity, where the free states are taken to be those with density matrices that are real with respect to a fixed basis. This theory is closely related to the resource theory of coherence, as it is basis dependent, and  the imaginary numbers appear in the off-diagonal elements of the density matrix. Unlike coherence however, the set of physically realizable free operations is identical to both completely resource non-generating operations, and stochastically resource non-generating operations. Moreover, the resource theory of imaginarity does not have a self-adjoint resource destroying map. After introducing and characterizing the free operations, we provide several measures of imaginarity, and give necessary and sufficient conditions for pure state transformations via physically consistent free operations in the single shot regime. 
\end{abstract}

\maketitle

\section{Introduction}

Quantum Resource Theories (QRTs) revolutionize the way think about familiar topics like entanglement, coherence, and symmetry. They provide an operational way to quantify and manipulate a given property (resource) of a quantum mechanical system,  and determine the information processing tasks that are attainable by consuming this resource. A well known example of a resource theory is entanglement theory \cite{Horodecki2009}, where the consumption of entanglement leads to protocols such as superdense coding \cite{Bennett1992} and quantum teleportation \cite{Bennett1993}. In recent years, there has been interest in developing resource theories for many other non-classical properties, such as resource theories of coherence \cite{Baumgratz2014,Chitambar2016, ChitambarHsieh2016, Winter2016, Napoli2016,Marvian2016}, athermality \cite{Brandao2013, Horodecki2013, Lostaglio2015, Gour2014, Narasimhachar2015}, asymmetry \cite{Gour2008,Gour2009, Skotiniotis2012, Marvian2013}, knowledge \cite{Rio2015}, magic \cite{Veitch2014, Howard2017, Ahmadi2017}, steering \cite{Gallego2015}, nonlocality \cite{Vicente2014}, contextuality \cite{Amaral2017}, and superposition \cite{Theurer2017}. Furthermore, there has been interest in developing mathematical frameworks that apply to large classes of resource theories \cite{Brandao2015,Coecke2016,Gour2017,Fritz2017}.

In general, a resource theory is completely characterized by physical restrictions on the set of possible transformations \cite{Gour2017,Coecke2016,Fritz2017}. These restrictions give rise to a set of \textit{free states} $\cF$, and a set of \textit{free operations}, $\cO$, which do not generate a resource. It is therefore a minimal requirement that the free operations acts invariantly on $\cF$. Often times the physical restrictions  lead to a much smaller set of free operations than just those that are resource non-generating (RNG). For example, Local Operations and Classical Communication (LOCC) in the resource theory of entanglement, do not contain all operations that do not generate entanglement when acting on separable states. The physical restriction to LOCC  leads to a much smaller class of operations than the full set of non-entangling operations. Yet, due to the notorious complexity of LOCC operations, other subsets of non-entangling operations are found to be useful, especially those which have a nice mathematical structure, such as PPT operations and separable operations (SEP) in entanglement theory \cite{Chitambar2016}.

A key feature of quantum mechanics is the necessity of imaginary numbers to accurately model the dynamics of a physical system. Although imaginary numbers have long been used in classical physics to simplify models of oscillatory motion and wave mechanics, it seems that they play a much deeper role in quantum physics as they are intrinsic to any orthodox formulation \cite{Wootters2012,Hardy2012,Aleksandrova2013}. Consider, for example, the polarization density matrix of a single photon in the $\left \{ \ket H, \ket V \right \}$ basis. The imaginary numbers in the density matrix gives rise to rotations of the electric field vector, i.e. elliptical or circular polarization. Motivated by this unique characteristic of quantum theory, we propose here a resource theory of imaginarity in which the resource arises from imaginary terms that appear in the density matrix of a state. With such a resource at hand, \emph{real} quantum mechanics can achieve the same tasks as standard quantum mechanics.
 
Such a resource theory is interesting for several reasons. The resource of imaginarity is closely related to quantum coherence as the imaginary terms of a density matrix always appear as off-diagonal elements. This relaxation of only considering the imaginary part of the off-diagonal elements provides a characterization of how relative phases between the measurement basis states effect the underlying dynamics of a system, as opposed to the total interference between basis states as is the case with coherence. From a conceptual standpoint, this theory may help to quantify the seemingly fundamental role that imaginary numbers play in quantum theory. In addition, a characterization of this theory is of interest as it is a part of a large class of QRTs known as affine resource theories, which also includes coherence theory. Recently, work has been done in developing theorems which can be applied to a general affine resource theory, and currently the resource theory of imaginarity is the only known affine resource theory which does not have a self-adjoint resource destroying map \cite{Gour2017}.

This paper is structured as follows. First we formally define free states and completely characterize the largest possible class of free operations. Next we discuss notions of physical consistency and show that by further restricting our free operations, we obtain a class of operations which simultaneously amends multiple physical inconsistencies (something that does not happen in the resource theory of coherence). We will then introduce the notion of a measure of imaginarity and propose several measures. Finally, we will discuss pure state transformations in the single-shot regime via free operations, where we will give necessary and sufficient conditions for the existence of such a transformation. \\

\section{Free States}

Analogous to coherence, imaginarity is intrinsically a basis dependent phenomena. Before defining our free states, we must therefore fix an orthonormal reference basis $\{ \ket{j} \}_{j=0}^{d-1}$ for our $d$-dimensional Hilbert space $\cH$. Such a requirement is not a detriment to the theory however, as in general, one fixes an experimental setup corresponding to some projective basis measurement prescribed by the physics of the system of interest. In the case of a composite state space $\cH_A \otimes \cH_B$, we reference the canonical choice of basis $\left\{ \ket{j}_A \otimes \ket{k}_B \right \}$ where the $k$ index is iterated first. Throughout this paper we will use $\cH$ to denote a $d$-dimensional Hilbert space, $\cL (\cH)$ to denote the set of linear operators on $\cH$ and $\cB$ to denote the set of density operators acting on $\cH$, which we can represent as matrices with respect to the fixed reference basis.

\begin{definition}
Let $\rho$ be a density operator on a $d$-dimensional Hilbert space $\cH$. We say that $\rho$ is real or free if $$\rho = \sum_{jk} \rho_{jk} \dyad{j}{k}$$
where each $\rho_{jk} \in \R$. We denote the set of all real density operators by $\cF$.
\end{definition} 

Note that since density matrices are Hermitian, conjugation is equivalent to transposition. It therefore follows that $\rho \in \cF$ if and only if $\rho$ is a symmetric matrix, that is, $\rho^T=\rho$. It is clear that $\cF$ is convex, as any convex combination of real density operators will remain a real density operator.
Moreover, $\mF$ is also affine~\cite{Gour2017}, that is, any density operator that can be expressed as an affine combination of real density operators is itself real.

\section{Free Operations}
In some resource theories, the set of free operations does not correspond to physical or experimental restrictions on the set of possible operations. Instead, one is interested in studying a certain property of a physical system and uses a resource theoretic framework to do so. For example, the resource theory of coherence is a state-centred theory, in which the set of free states (incoherent states) is well defined whereas the choice of free operations is not unique. In fact, there is some controversy as to which set of free (incoherent) operations is most physical~\cite{Chitambar2016,Marvian2016}.

The minimal condition on any set of free operations is that a free operation can only map a free state to a free state. This ensures that the set of free operations do not generate resources from free states. The largest possible set of free operations in any conceivable resource theory is therefore the resource non-generating (RNG) operations. As we will discuss below however, RNG operations are typically too large in the sense that they are not physically consistent. Yet in order to derive a more physically consistent QRT of imaginarity, we start with the following characterization of RNG operations.

\begin{theorem} \label{thm:MRO}
Let $\cE:\cL(\mH_B)\to\cL(\mH_A)$ be a quantum channel and $$J = \sum_{jk} \cE (\dyad{j}{k}) \otimes \dyad{j}{k}$$ the Choi matrix of $\cE$ acting on $\cH_A \otimes \cH_B$. Then $\cE$ is resource non-generating if and only if $J-J^{\Gamma_A}$ is symmetric. Here $\Gamma_A$ denotes the partial transpose with respect to the Hilbert space $\cH_A$.
\end{theorem}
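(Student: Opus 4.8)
The plan is to translate the condition ``$\cE$ maps $\cF$ into $\cF$'' into a statement about the Choi matrix $J$, and then recognize that statement as the symmetry of $J - J^{\Gamma_A}$. First I would recall that, since $\cF$ consists exactly of the Hermitian states $\rho$ with $\rho^T = \rho$, being resource non-generating means: for every density operator $\rho$ on $\cH_B$ with $\rho = \rho^T$, the output $\cE(\rho)$ satisfies $\cE(\rho) = \cE(\rho)^T$. Because $\cF$ affinely spans the real span of the symmetric matrices and $\cE$ is linear, this is equivalent to the purely linear-algebraic condition that $\cE(X) = \cE(X)^T$ for \emph{all} symmetric $X \in \cL(\cH_B)$ (Hermitian or not) --- one passes from density operators to arbitrary symmetric operators by taking affine combinations and rescaling, using that $\cF$ has nonempty interior in the real-symmetric subspace. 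This reduction is the conceptual heart of the argument; the rest is bookkeeping with the Choi isomorphism.

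Next I would set up the Choi-matrix dictionary. Recall $\cE(\sigma) = \Tr_B\!\big[ J (\openone_A \otimes \sigma^T) \big]$, so the output transpose is $\cE(\sigma)^T = \Tr_B\!\big[ J^{\Gamma_A} (\openone_A \otimes \sigma^T) \big]$, where I have used that transposing on the $A$ factor inside the partial trace over $B$ has the stated effect and that $\sigma^T$ is unaffected. Hence $\cE(\sigma) = \cE(\sigma)^T$ for all $\sigma$ with $\sigma^T = \sigma$ is equivalent to
\[
\Tr_B\!\big[ (J - J^{\Gamma_A})(\openone_A \otimes Y) \big] = 0 \qquad \text{for every symmetric } Y \in \cL(\cH_B),
\]
(writing $Y = \sigma^T = \sigma$). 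Now decompose $K := J - J^{\Gamma_A}$ in a product basis; the condition says that ``contracting $K$ against $\openone_A$ on the $A$ side and against any symmetric $Y$ on the $B$ side gives zero,'' which is exactly the requirement that the $B\!-\!B$ partial-transpose-symmetrization of $\Tr_A K$ vanishes. A short index computation then shows that $\Tr_B[K(\openone_A\otimes Y)] = 0$ for all symmetric $Y$ iff $K = K^{\Gamma_B}$... but $J$ is Hermitian by construction and one checks $K^T = -K^{\Gamma_A\Gamma_B}$, so combining $K = K^{\Gamma_B}$ with Hermiticity gives precisely $K = K^T$, i.e. $J - J^{\Gamma_A}$ is symmetric. (The two partial transposes and the full transpose on the bipartite space satisfy $\Gamma_A \circ \Gamma_B = T$, which is the identity that glues these together.)

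I expect the main obstacle to be the careful handling of the two directions of the ``contract against all symmetric $Y$'' step: one must be sure that testing only against \emph{symmetric} $Y$ (rather than all $Y$) is exactly strong enough, and simultaneously that it is not too weak --- the antisymmetric part of $K$ in the $B$ indices is unconstrained, which is why the conclusion is symmetry of $J - J^{\Gamma_A}$ as a whole rather than its vanishing. The cleanest way to control this is to split every operator into symmetric and antisymmetric parts in each tensor factor, expand $\Tr_B[K(\openone_A \otimes Y)]$ accordingly, and check which blocks of $K$ survive the pairing with a symmetric $Y$; this makes transparent both that the condition is $\Gamma_B$-invariance of $K$ and, via Hermiticity of $J$, that this coincides with $T$-invariance. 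I would also double-check the edge conventions in the definition of $J$ (which index sits in $\cH_A$ versus $\cH_B$, and on which factor $\cE$ acts) so that the partial transpose $\Gamma_A$ lands on the correct side; a sign or side error here would swap $\Gamma_A$ and $\Gamma_B$ and break the final identification.
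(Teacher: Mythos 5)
Your strategy coincides with the paper's: reduce ``RNG'' to the linear condition $\cE(Y)=\cE(Y)^T$ for all symmetric $Y$ (valid because $\cF$ spans the symmetric operators), rewrite both sides through the Choi matrix to get $\Tr_B\big[(J-J^{\Gamma_A})(\openone\otimes Y)\big]=0$ for all symmetric $Y$, read this as a Hilbert--Schmidt orthogonality statement about $K:=J-J^{\Gamma_A}$, and convert it into symmetry of $K$ using $\Gamma_A\circ\Gamma_B=T$. So there is no methodological difference; the issue is that the two identities you actually write down in the final assembly are both false as stated, and your conclusion only comes out right because the errors compensate.

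Concretely: (i) orthogonality of $K$ to $\cL(\cH_A)\otimes\mathrm{Sym}(\cH_B)$ places $K$ in $\cL(\cH_A)\otimes\mathrm{Antisym}(\cH_B)$, which reads $K^{\Gamma_B}=-K$, \emph{not} $K=K^{\Gamma_B}$ --- your own (correct) remark that only the $B$-antisymmetric part of $K$ survives the pairing already contradicts $K=K^{\Gamma_B}$. (ii) $\Gamma_A\circ\Gamma_B$ is the full transpose with no sign, so $K^T=K^{\Gamma_A\Gamma_B}$, not $-K^{\Gamma_A\Gamma_B}$. (iii) Hermiticity of $J$ is a red herring here (it concerns complex conjugation, not transposition); the fact that actually closes the argument is $K^{\Gamma_A}=J^{\Gamma_A}-J=-K$, which holds by construction since $\Gamma_A$ is an involution. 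With these corrections the chain is $K^T=(K^{\Gamma_A})^{\Gamma_B}=(-K)^{\Gamma_B}=-K^{\Gamma_B}=K$, which is exactly the paper's forward direction. For the converse you should make the ``short index computation'' explicit rather than bundling it into an ``iff'': from $K^T=K$ and $K^{\Gamma_A}=-K$ one recovers $K^{\Gamma_B}=-K$, and then $\Tr_B[K(\openone\otimes\rho^T)]=\Tr_B[K^{\Gamma_B}(\openone\otimes\rho)]=-\Tr_B[K(\openone\otimes\rho)]$ vanishes whenever $\rho=\rho^T$, giving $\cE(\rho)=\cE(\rho)^T$ on $\cF$ as required.
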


\begin{proof}
Suppose $\cE$ is RNG, then for any $\rho \in \cF$ we have that $[\cE(\rho)]^T=\cE(\rho)$. Rewriting this condition in the Choi representation we get that  
\begin{align*}
\Tr_B \left(  \left[ J(I\otimes \rho^T) \right]^{\Gamma_A}  \right) &= \Tr_B \left( J^{\Gamma_A} (I \otimes \rho^T)\right) \\ &= \Tr_B \left( J(I\otimes \rho^T) \right).
\end{align*} So we have $$\Tr_B \left[ (J-J^{\Gamma_A})(I \otimes \rho^T) \right] =0$$ and thus $$\Tr \left[ (J-J^{\Gamma_A})(X \otimes \rho) \right]=0$$ for any $X \in \cL(\cH_A)$ and symmetric $\rho$. We may view this as an orthogonality condition with respect to the Hilbert-Schmidt inner product, and conclude that
\begin{equation*}
J-J^{\Gamma_A}\in\mL(\mH_A)\otimes\mK_B\;,
\end{equation*}
where $\mK_B$ is the space of antisymmetric operators in $\mL(\mH_B)$. This in particular implies that
$$
\left( J-J^{\Gamma_A} \right)^{\Gamma_B} = J^{\Gamma_A}-J \;,
$$ 
so that 
 \begin{align*}
 \left( J-J^{\Gamma_A} \right)^T &= \left( \left[ J-J^{\Gamma_A} \right]^{\Gamma_B} \right)^{\Gamma_A} \\
&= \left( J^{\Gamma_A}-J \right)^{\Gamma_A} \\
&= J-J^{\Gamma_A}.
 \end{align*}
 For the converse, suppose $J-J^{\Gamma_A}$ is symmetric. Then 
 $$
 J - J^{\Gamma_A}=\left(J-J^{\Gamma_A} \right)^T =  J^T - J^{\Gamma_B}
 $$ 
 and so 
 $$ \left( J-J^{\Gamma_A} \right)^{\Gamma_B}= (J^T - J^{\Gamma_B})^{\Gamma_B}= -\left( J-J^{\Gamma_A} \right) .$$ Thus
\begin{align*}
\cE(\rho) &= \Tr_B \left[ J(I\otimes \rho^T) \right] \\
&= \Tr_B \left[ J^{\Gamma_A}(I\otimes \rho^T) \right]+ \Tr_B \left[ \left( J-J^{\Gamma_A} \right)(I\otimes \rho^T) \right] \\
&= [\cE(\rho)]^T + \Tr_B \left[ \left( J-J^{\Gamma_A} \right)(I\otimes \rho^T) \right].
\end{align*}
Finally, note that
\begin{align*}
\Tr_B \left[ \left( J-J^{\Gamma_A} \right)(I\otimes \rho^T) \right] &= \Tr_B \left[ \left( J-J^{\Gamma_A} \right)^{\Gamma_B}(I\otimes \rho^T)^{\Gamma_B} \right] \\
&= -\Tr_B \left[ \left( J-J^{\Gamma_A} \right)(I\otimes \rho) \right]
\end{align*}
and so $\Tr_B \left[ \left( J-J^{\Gamma_A} \right)(I\otimes \rho^T) \right]= 0$ whenever $\rho$ is symmetric, therefore $\cE(\rho) = [\cE(\rho)]^T$ whenever $\rho \in \cF$.
\end{proof}

\subsection{Physical Constraints on the Free Operations}

The classification of the complete set of RNG operations is of interest as it was shown in \cite{Brandao2015} that a QRT is asymptotically reversible if its set of free operations is maximal. We will however, concern ourselves with a restricted class of free operations based on the ground of physical consistency. In what follows, we define three types of such free operations and then show that they are all identical. We start by showing that there are RNG operations as defined in Theorem~\ref{thm:MRO} which can generate resources when applied to part of a larger composite system. 

Theorem \ref{thm:MRO} classifies an arbitrary RNG operation with the condition that the Choi matrix must satisfy a symmetry constraint upon taking a partial transpose. It is straightforward to construct (even in the qubit case) a Choi matrix $J$ which satisfies the condition in Theorem~\ref{thm:MRO} but is not  itself symmetric. On the other hand,
we have $\frac 1d J= \cE \otimes \text{id}_d \left(\dyad{\phi^+} \right)$ where $\ket{\phi^+}= \frac{1}{\sqrt d}\sum_j \ket{jj} $ is the normalized maximally entangled state. It therefore follows that $\dyad{\phi^+} \in \cF$ but 
$$
\cE \otimes \text{id}_d \left(\dyad{\phi^+} \right) = \frac 1d J \not\in \cF\;;
$$
i.e.  $\cE$ is a RNG operation that generates a resource when applied to part of a maximally entangled state. This inconsistency can be rectified by requiring the free operations to be closed under tensor products, a property which we will call \emph{completely resource non-generating}.

\begin{definition}
A RNG operation, $\mE$, is said to be \textit{completely resource non-generating} if $\cE \otimes \text{id}_k$ is resource non-generating for all $k \in \mathbb N$.
\end{definition}

This definition is analogous to the class of completely non-entangling operations introduced in \cite{Chitambar2017}, and ensures that resources can never be generated by applying a free operation to part of a larger composite system. 

One may also impose the requirement that resources cannot be generated stochastically, i.e. with probability less than one. This leads to an alternate class of free operations which we call stochastically resource non-generating operations.

\begin{definition}
A quantum operation $\cE$  is said to be \emph{stochastically RNG} if it has an operator sum representation with Kraus operators $\{ K_j\}_{j=1}^{m}$ such that
$$\frac{K_j \rho K_j^\dagger}{\Tr(K_j \rho K_j^\dagger)} \in \cF$$ for all $j$ whenever $\rho \in \cF$.
\end{definition}

This class of free operations ensures that resources cannot be generated probabilistically, and is precisely how incoherent operations (IO) are defined in the resource theory of coherence \cite{Baumgratz2014}.

Furthermore, it is argued in \cite{Chitambar2016, Marvian2016} that for a resource theory to be physically realizable, the set of free operations should admit a free dilation. This meaning that the operation can be implemented via free unitaries acting on some larger composite system.

\begin{definition}
A unitary $U$ is said to be a \textit{free unitary} if $U \rho U^\dagger \in \cF$ whenever $\rho \in \cF$.
\end{definition}

\begin{definition} \label{defn:PRO}
A RNG operation $\cE$ is said to be \textit{physically realizable} if it admits a free dilation. That is, there exists a Hilbert space $\cH_E$, a free state $\dyad{0}_E \in \cH_E$, and a free unitary $U_{AE}$ acting on the joint state space $\cH_A \otimes \cH_E$ such that $$\cE (\rho)= \Tr_E \Big[ U_{AE} \left( \rho \otimes \dyad{0}_E \right) U_{AE}^\dagger \Big]$$ for any density operator $\rho \in \cH$.
\end{definition} 

In general, the class of physically realizable operations can be much smaller than the class of all RNG operations (see Fig. \ref{fig:venn1}). However, recall that a CPTP map (and a generalized measurement) provides only an \emph{effective} description of the evolution of a physical system. Basically, the evolution of a closed system is always unitary. Therefore, in order to implement a free CPTP map, one has to first implement a free unitary on the system+ancilla and then trace out the ancillary system. This means that many CPTP maps that are RNG, and even completely RNG, cannot be implemented freely, and in this sense are not really free. 

\begin{figure}[t]
    \centering
    \includegraphics[scale=0.35]{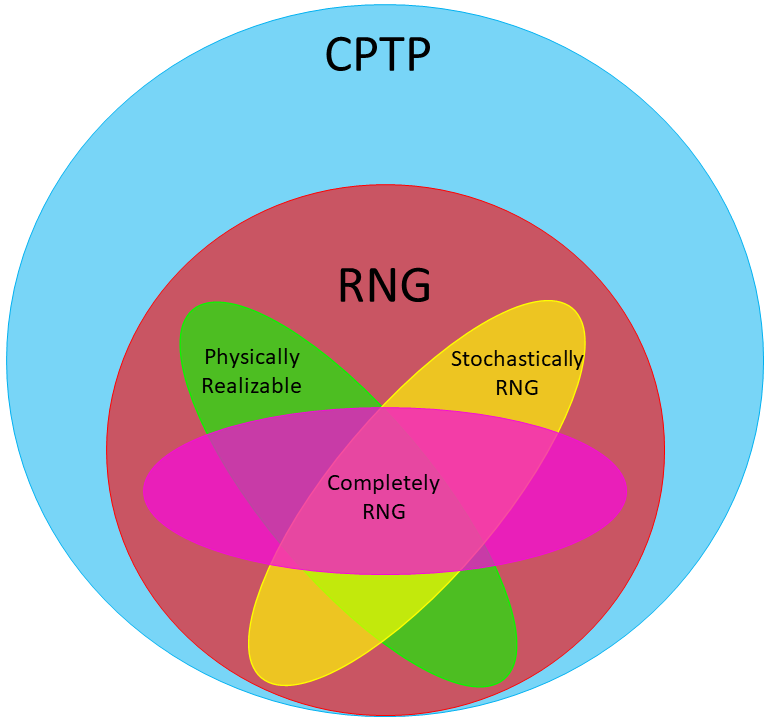}
    \caption{A Venn diagram showing some possible choices of free operations. Resource non-generating (RNG) operations form the largest possible set of free operations within the set of all CPTP maps. Completely RNG operations ensure that the operations are closed under tensor products. Stochastically RNG operations ensure that resources cannot be generated probabilistically. Physically realizable operations are those which can be implemented at no cost.}
    \label{fig:venn1}
\end{figure}

It follows from definition \ref{defn:PRO} that the class of free unitaries must be characterized in order to implement a physically realizable operations. This is done in the following lemma.

\begin{lemma}
A unitary matrix $U$ is a free unitary if and only if $U= e^{i\theta} Q$ for some $\theta \in [0,2\pi)$ and real orthogonal matrix $Q$.
\label{thm:NSCunitary}
\end{lemma}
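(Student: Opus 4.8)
The plan is to characterize free unitaries directly from the defining condition $U\rho U^\dagger \in \cF$ for all $\rho \in \cF$. The key observation is that the free states $\cF$ are exactly the symmetric density matrices, so the condition says that $U\rho U^\dagger$ is symmetric (equivalently, fixed by transposition) whenever $\rho$ is. First I would translate ``$U\rho U^\dagger = (U\rho U^\dagger)^T$'' using Hermiticity of $\rho$: since $\rho^T = \bar\rho$ and here $\rho^T = \rho$, we get $(U\rho U^\dagger)^T = \bar U \rho^T U^T = \bar U \rho U^T$. So the condition becomes $U\rho U^\dagger = \bar U \rho U^T$ for all symmetric $\rho$, i.e. $(U^T U)\rho = \rho (U^T U)$ — wait, more carefully, multiply on the left by $U^\dagger$ and right by $\bar U$: the condition is equivalent to $\rho (U^\dagger \bar U) = (U^\dagger \bar U)\rho$ — let me instead set $V = U^T U$ and rearrange to $V \rho = \rho V$ — the cleanest route is: $U\rho U^\dagger = \bar U\rho U^T \iff (U^\dagger \bar U)\,\rho = \rho\,(U^T U)^{-1}\cdots$. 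I will do this bookkeeping carefully in the writeup; the upshot is that the matrix $S := U^T U$ commutes with every symmetric density matrix.

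Second, I would use the fact that the symmetric matrices (real-linear span including density matrices) are large enough to force $S$ to be scalar. Since every symmetric matrix is a real affine combination of symmetric density matrices (as noted in the excerpt, $\cF$ is affine and spans the symmetric operators), commuting with all symmetric density matrices forces $S$ to commute with all symmetric matrices, in particular with all real diagonal matrices (so $S$ is diagonal) and with all real symmetric matrices like $\dyad{j}{k}+\dyad{k}{j}$ (so all diagonal entries of $S$ coincide). Hence $S = U^T U = \lambda I$ for some scalar $\lambda$. Taking determinants or noting $S$ is unitary and a scalar gives $|\lambda| = 1$, so write $\lambda = e^{2i\theta}$ and set $Q := e^{-i\theta} U$. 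Then $Q^T Q = e^{-2i\theta} U^T U = I$, and $Q$ is unitary, so $\bar Q = (Q^T)^{-1} = (Q^{-1})^{-1}\cdots$ — precisely, $Q$ unitary means $Q^\dagger Q = I$, and $Q^T Q = I$ means $Q^T = Q^{-1} = Q^\dagger$, hence $\bar Q = Q$, i.e. $Q$ is real; combined with $Q^T Q = I$ this says $Q$ is real orthogonal. This gives $U = e^{i\theta} Q$ as claimed.

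For the converse, if $U = e^{i\theta}Q$ with $Q$ real orthogonal, then for symmetric $\rho$ we have $U\rho U^\dagger = Q\rho Q^T$, which is manifestly real (product of real matrices) and symmetric, hence in $\cF$; so $U$ is free.

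I expect the main obstacle to be purely notational: keeping the transpose/conjugate/adjoint manipulations straight when extracting the constraint $U^T U = \lambda I$, since several near-identical rearrangements appear and it is easy to land on $U U^T$ or $\bar U U$ instead. The conceptual content — commutant of the symmetric density matrices is the scalars, and a unitary $U$ with $U^T U$ scalar is a phase times a real orthogonal — is short once that constraint is correctly in hand.
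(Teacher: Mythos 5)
Your proposal is correct and follows essentially the same route as the paper: both reduce the condition $U\rho U^\dagger\in\cF$ for all $\rho\in\cF$ to the statement that $U^TU$ acts trivially on (equivalently, commutes with) all symmetric density matrices, then test against $\dyad{j}$ and $\dyad{j}{k}+\dyad{k}{j}$ to force $U^TU=e^{i\theta}I$, and finally peel off the phase to obtain a real orthogonal $Q$. The transpose bookkeeping you flag as the main risk does come out as you expect, and the converse is the same one-line computation as in the paper.
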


\begin{proof}
Let $U$ be a free unitary, then for any $\rho \in \cF$ we have that $$ \left[ U\rho U^\dagger \right]^T=U\rho U^\dagger $$
and so 
$$\rho = (U^TU) \rho (U^TU)^\dagger$$
where $U^TU$ is also unitary. This equation must hold for all symmetric $\rho$ so for each $j=0,1,...,d-1$:
$$U^TU \dyad{j}{j} (U^TU)^\dagger = (U^TU \ket{j})(U^TU \ket{j})^\dagger =\dyad{j}{j}$$
and thus $U^TU \ket{j} = e^{i\theta_j} \ket{j}$ for some $\theta_j \in [0,2\pi)$. Similarly, for each $k=1,2,...,d-1$ and $j<k$:
\begin{align*}
\dyad{j}{k}+\dyad{k}{j} &= U^TU(\dyad{j}{k}+\dyad{k}{j})(U^TU)^\dagger \\ 
 &=e^{i(\theta_j-\theta_k)} \dyad{j}{k} +e^{i(\theta_k-\theta_j)} \dyad{k}{j}
\end{align*}
therefore $\theta_j = \theta_k$ for each $j,k$ and thus $\theta_0=\theta_1=...=\theta_{d-1} =: \theta$. It follows that $$U^TU = e^{i\theta} \sum_{j=0}^{d-1} \dyad{j} = e^{i\theta} I.$$ Define $Q := e^{-i\theta/2}U$, then $Q^TQ=I$ and $Q^\dagger Q=I$ so $Q$ is real and orthogonal. The converse is trivial as clearly $[U\rho U^\dagger]^T = Q\rho^T Q^T \in \cF$ whenever $\rho \in \cF$.
\end{proof}

It follows from Lemma \ref{thm:NSCunitary} that the set of free unitaries in the resource theory of imaginarity differs quite significantly from the free unitaries in coherence theory. In particular, the most general unitary which does not generate coherence takes the form $U = \sum_j e^{i\theta_j} \dyad{\pi(j)}{j}$ where $\pi$ is some permutation on the set of indices $\{0,1,...,d-1\}$ \cite{Marvian2016,Chitambar2016}. So while the resource of imaginarity is preserved under any orthogonal basis transformation, coherence is only preserved under permutations of the basis elements. In this sense, the resource theory of imaginarity contrasts from coherence theory in the sense that some of the basis dependence has been alleviated.

It turns out that each of the possible physical constraints that were discussed corresponds to the same set of free operations, forming an equivalence class of operations that we call physically consistent. We present this fact in the main result of this section.

\begin{theorem} \label{thm:RealChoi}
Let $\cE$ be a quantum channel, then the following are equivalent:
\begin{enumerate}
\item $\cE$ is completely resource non-generating.
\item The Choi matrix of $\cE$ is real.
\item $\cE$ is stochastically resource non-generating.
\item $\cE $ is physically realizable.
\end{enumerate}
\end{theorem}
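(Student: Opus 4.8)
The plan is to prove the chain of implications $(2) \Rightarrow (1) \Rightarrow (3) \Rightarrow (2)$ and then close the loop with $(2) \Leftrightarrow (4)$. The cleanest hub is condition (2), reality of the Choi matrix $J$, since it is the most rigid and easiest to manipulate. I would begin by observing that $\tfrac1d J = \cE\otimes\id_d(\dyad{\phi^+})$, so if $J$ is real then $\cE\otimes\id_d$ preserves $\cF$ on the maximally entangled input; but more is true: reality of $J$ means the Kraus operators of $\cE$ can be taken real (via a spectral decomposition of $J$ with real eigenvectors), and a channel with real Kraus operators manifestly satisfies $\cE\otimes\id_k(\rho)$ real whenever $\rho$ is real, for every $k$. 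This gives $(2)\Rightarrow(1)$. The same real-Kraus-operator observation makes $(2)\Rightarrow(3)$ essentially immediate, since $K_j\rho K_j^\dagger = K_j\rho K_j^T$ is symmetric when $\rho$ is and $K_j$ is real, so each post-measurement state is free.

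For $(1)\Rightarrow(2)$: if $\cE\otimes\id_d$ is RNG, then since $\dyad{\phi^+}\in\cF$ we get $\tfrac1d J = \cE\otimes\id_d(\dyad{\phi^+}) \in \cF$, i.e.\ $J$ is real. This is short and is exactly the inconsistency argument already rehearsed in the text before the definition of completely RNG operations, run in reverse. For $(3)\Rightarrow(2)$: suppose $\cE$ has Kraus operators $\{K_j\}$ with $K_j\rho K_j^\dagger/\Tr(\cdot)$ symmetric for all symmetric $\rho$; I would apply this to the rank-one symmetric inputs $\dyad{v}$ for real unit vectors $v$ and argue (as in the proof of Lemma~\ref{thm:NSCunitary}) that each $K_j$ must equal a scalar times a real matrix — testing on $\ket j$ and on $\ket j + \ket k$ forces $K_j\bar K_j^{-1}$, suitably interpreted on the support, to be a global phase — whence the Kraus operators are real up to phase and $J = \sum_j \dyad{K_j}$ (vectorized) is real. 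Some care is needed because $K_j$ need not be invertible, so I would instead argue directly that $\bra{a}K_j^\dagger K_j\ket{b}$ must be real for all real $a,b$, by expanding $(K_j(\dyad{a}+\dyad{b})K_j^\dagger)^T = K_j(\dyad{a}+\dyad{b})K_j^\dagger$ and $(K_j\dyad{a}K_j^\dagger)^T = K_j\dyad{a}K_j^\dagger$ and subtracting, which then forces reality of $K_j$ up to a single phase after a polar-type normalization.

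Finally $(4)\Leftrightarrow(2)$: by Lemma~\ref{thm:NSCunitary} a free unitary is $e^{i\theta}Q$ with $Q$ real orthogonal. If $\cE$ is physically realizable, $\cE(\rho) = \Tr_E[U_{AE}(\rho\otimes\dyad0_E)U_{AE}^\dagger]$ with $U_{AE} = e^{i\theta}Q_{AE}$; the global phase cancels, so $\cE(\rho) = \Tr_E[Q_{AE}(\rho\otimes\dyad0_E)Q_{AE}^T]$, which has real Kraus operators $K_m = (\,\bra m_E\otimes I_A\,)Q_{AE}\ket0_E$, hence real Choi matrix, giving $(4)\Rightarrow(2)$. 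Conversely, given $J$ real, take the real Kraus operators $\{K_m\}$ and Stinespring-dilate them in the standard way, $U_{AE}\ket{\psi}_A\ket0_E = \sum_m K_m\ket\psi_A\ket m_E$, extended to a unitary; since all $K_m$ are real, the matrix elements of $U_{AE}$ in the product basis are real, so $U_{AE}$ is a real orthogonal matrix and in particular a free unitary, establishing $(2)\Rightarrow(4)$. The main obstacle I anticipate is the $(3)\Rightarrow(2)$ step: handling possibly non-invertible Kraus operators and pinning down the per-Kraus-operator global phase requires the most care, whereas the other implications are direct consequences of the real-Kraus-operator characterization and the already-established Lemma~\ref{thm:NSCunitary}.
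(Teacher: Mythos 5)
Your overall architecture (taking reality of the Choi matrix as the hub and radiating the other conditions from it) is sound, and the implications $(1)\Leftrightarrow(2)$, $(2)\Rightarrow(3)$ and $(2)\Leftrightarrow(4)$ are correct and essentially the same arguments the paper uses in its cycle $(1)\Rightarrow(2)\Rightarrow(3)\Rightarrow(4)\Rightarrow(1)$. The fatal problem is the step you yourself flag as the hard one, $(3)\Rightarrow(2)$: the claim that the stochastic RNG condition forces each Kraus operator to be a scalar times a real matrix is false. Consider the qubit channel with Kraus operators $K_0=\ket{0}\!\bra{v_-}$ and $K_1=\ket{1}\!\bra{v_+}$, where $\ket{v_\pm}=\tfrac{1}{\sqrt2}(\ket{0}\pm i\ket{1})$. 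Then $K_0^\dagger K_0+K_1^\dagger K_1=\dyad{v_-}+\dyad{v_+}=I$, and for every real $\rho\in\cF$ one has $K_0\rho K_0^\dagger=\bra{v_-}\rho\ket{v_-}\,\dyad{0}=\tfrac12\dyad{0}$ and $K_1\rho K_1^\dagger=\tfrac12\dyad{1}$, so this channel is stochastically RNG exactly as defined. Yet neither $K_j$ is real up to a global phase, and the channel is not transposition covariant: $\cE(\dyad{v_+})=\dyad{1}$ while $\cE\big((\dyad{v_+})^T\big)=\cE(\dyad{v_-})=\dyad{0}$. By the paper's Corollary this means its Choi matrix is not real, so $(3)\Rightarrow(2)$ fails on this example. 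Your fallback argument also breaks: $K_0^\dagger K_0=\dyad{v_-}$ has off-diagonal entry $i/2$, so $\bra{a}K_j^\dagger K_j\ket{b}$ need not be real for real $a,b$; the identities you propose to subtract only constrain the Hermitian combinations $(K_ja)(K_jb)^\dagger+(K_jb)(K_ja)^\dagger$, which carry no information about the off-diagonal phases of $K_j^\dagger K_j$.

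This is not something more care about invertibility or per-branch phases can repair: with the literal definition of stochastically RNG used here (existence of \emph{some} operator-sum representation whose normalized branches preserve $\cF$), condition (3) is strictly weaker than condition (2). It is worth noting that the paper's own proof never derives reality of the Kraus operators from condition (3) either --- its proof of $(3)\Rightarrow(4)$ simply begins by supposing that $\{K_j\}$ is a real set of Kraus operators, which is a strictly stronger hypothesis than the stated definition of stochastic RNG. So the difficulty you anticipated is genuine and diagnostic: closing the loop through (3) requires either strengthening that definition (e.g., demanding the branch condition for every Kraus representation, or for the Kraus operators themselves to be real up to phase) or excluding channels like the one above. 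The remaining five implications in your proposal are fine as written.
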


\begin{figure}[t]
    \centering
    \includegraphics[scale=0.35]{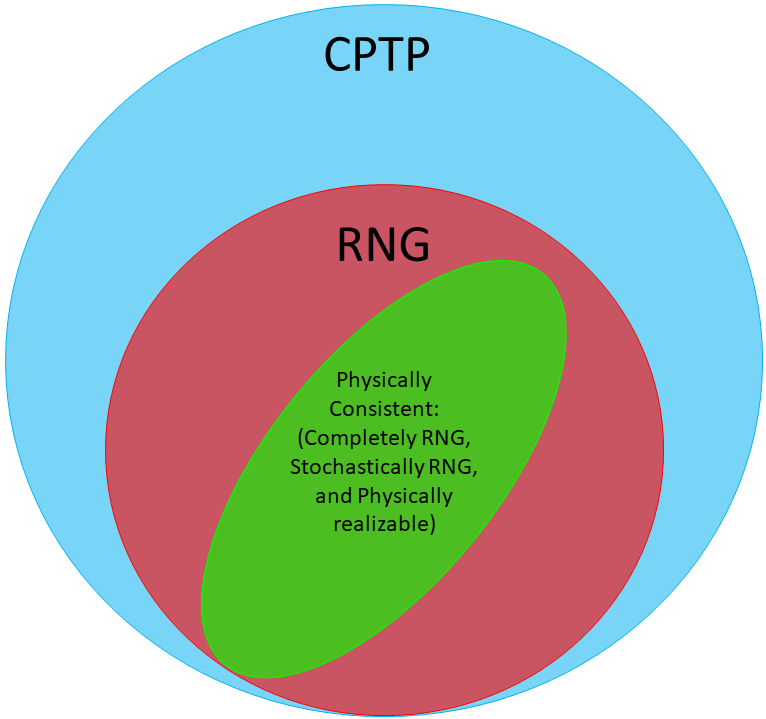}
    \caption{Each of the physical constraints leads to an equivalent class of free operations in the resource theory of imaginarity.}
    \label{fig:venn2}
\end{figure}

\begin{proof}

For $(1) \implies (2)$, suppose $\cE \otimes \text{id}_k$ is RNG for all $k \in \mathbb N$. Fix $k=d$  and let $\ket{\phi^+}= \frac{1}{\sqrt{d}} \sum_j \ket{jj}$, then $\dyad{\phi^+} \in \cF$ and so $$\cE \otimes \text{id}_d(\dyad{\phi^+}{\phi^+}) \in \cF .$$ Thus the Choi matrix is real.

For $(2) \implies (3)$, suppose that the Choi matrix $J$ is real, then $J$ has a spectral decomposition $$J= \sum_{j} \lambda_j \dyad{v_j}{v_j}$$ where $\{ \ket{v_j}\}$ is a set of real orthonormal eigenvectors of $J$. A set of Kraus operators $\{ K_j \}$ can then be obtained by inverting the vectorization operation on each of the eigenvectors $$\text{vec}(K_j)= \sqrt{\lambda_j} \ket{v_j} \hspace{1.5cm} \text{\cite{Watrous2018}} $$
where the vectorization operator is an isomorphism defined in the usual way: $\text{vec}(\dyad{i}{j}) := \ket{j}\otimes \ket{i}$. Since $J$ is positive, each $\sqrt{\lambda_j}$ is real and thus $\{ K_j\}$ is a set of real Kraus operators.

For $(3) \implies (4)$, suppose $\{ K_j \}_{j=1}^N$ is a real set of Kraus operators, then we may construct a unitary $U_{AE}$ acting on the state space along with some ancillary system (just as in definition \ref{defn:PRO}) where $U_{AE}$ satisfies $$K_j= \bra{j}_E U_{AE} \ket{0}_E$$ for each $j$. Since each $K_j$ is real, we may restrict the column space of $U_{AE}$ to $\R^N$ and pick the remaining columns of the matrix to be orthogonal. The matrix $U_{AE}$ will then be real and orthogonal, and  
\begin{align*}
\cE(\rho) &= \sum_j K_j \rho K_j^\dagger \\
&= \sum_j \bra{j}_E U_{AE} \Big[ \ket{0}_E \rho \bra{0}_E \Big] U_{AE}^\dagger  \ket{j}_E \\
&=\sum_j \bra{j}_E U_{AE} \Big[ \rho \otimes \dyad{0}_E \Big] U_{AE}^\dagger  \ket{j}_E \\
&= \Tr_E \Big[ U_{AE} (\rho \otimes \dyad{0}_E) U_{AE}^\dagger \Big]
\end{align*}
Therefore $\cE$ admits a free dilation and thus $\cE \in \cO$.

For $(4) \implies (1)$, suppose $$\cE(\rho) = \Tr_E \Big[ U_{AE} (\rho \otimes \dyad{0}_E) U_{AE}^\dagger \Big]$$ for some real orthogonal matrix $U_{AE}$ and consider a composite state space $\cH \otimes \cH_B$ where $\text{dim} \cH_B = k$ and some joint free state $\sigma \in \mathcal D(\cH \otimes \cH_B) \cap \cF$. Then $\sigma = \sum_{j\ell} r_{j\ell} A_j \otimes B_\ell$ for some $r_{j\ell} \in \mathbb R$ and real matrices $A_j$ and $B_\ell$. Thus
\begin{align*}
\cE \otimes \text{id}_k (\sigma) &= \cE \otimes \text{id}_k \left( \sum_{j\ell} r_{j\ell} A_j \otimes B_\ell \right) \\
&= \sum_{j\ell} r_{j\ell} \cE(A_j) \otimes B_\ell \\
&= \sum_{j\ell} r_{j\ell} \Tr_E \Big[ U_{AE} (A_j \otimes \dyad{0}_E) U_{AE}^\dagger \Big] \otimes B_\ell
\end{align*}
where $U_{AE}$ is a real matrix. Therefore $\cE \otimes \text{id}_k$ is resource non-generating for any $k \in \mathbb N$.
\end{proof}

Taking our set of free operations to be those with real Choi matrices fixes each of the previously discussed physical inconsistencies simultaneously. This is in contrast to coherence theory, where the physically realizable operations (PIO) are a strict subset of those which are stochastically resource non-generating (IO) \cite{Chitambar2016}. We note that the various sets of RNG operations listed in Theorem \ref{thm:RealChoi} are analogous to the usual representations of quantum channels. In particular, we see that if the set of all CPTP maps are restricted to just the completely RNG operations, that these operations admit a Stinespring dilation, as well as operator-sum and Choi representations which are valid when one restricts the underlying Hilbert space to the real numbers. The set of physically consistent free operations therefore provide a characterization of channels in real quantum mechanics which largely resemble the standard theory of CPTP maps.

It is clear that the set of physically consistent operations is convex. Furthermore, this class of operations admit a nice algebraic structure, which we call \textit{transposition covariance}.

\begin{corollary}
A quantum channel $\cE$ is physically consistent if and only if it commutes with the transposition map. That is, $$\cE(\rho)^T = \cE(\rho^T)$$ for all $\rho \in \cB$.
\end{corollary}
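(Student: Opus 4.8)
The plan is to prove both directions of the equivalence using the characterization from Theorem~\ref{thm:RealChoi}, namely that $\cE$ is physically consistent if and only if its Choi matrix $J$ is real. So I would first reformulate the claimed transposition-covariance identity $\cE(\rho)^T = \cE(\rho^T)$ for all $\rho \in \cB$ (equivalently for all $\rho \in \cL(\cH_B)$ by linearity) in terms of the Choi matrix, using the already-invoked formula $\cE(\rho) = \Tr_B[J(I\otimes \rho^T)]$. Taking the transpose of the right-hand side and pushing it through the partial trace gives $\cE(\rho)^T = \Tr_B[J^{\Gamma_A}(I\otimes \rho)]$ (this is exactly the manipulation already performed at the start of the proof of Theorem~\ref{thm:MRO}), while $\cE(\rho^T) = \Tr_B[J(I\otimes \rho)]$. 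Hence the covariance identity holds for all $\rho$ if and only if $\Tr_B[(J - J^{\Gamma_A})(I\otimes \rho)] = 0$ for all $\rho$, which by the same Hilbert–Schmidt orthogonality argument as in Theorem~\ref{thm:MRO} is equivalent to $J = J^{\Gamma_A}$.

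Next I would show that $J = J^{\Gamma_A}$ is equivalent to $J$ being real. One direction is immediate: if $J$ is real then, since $J$ is also Hermitian, $J = J^\dagger = J^T$, and combining $J = J^T$ with $J = J^{\Gamma_A}$... — actually the cleaner route is to observe that $J$ real and Hermitian forces $J = J^T = (J^{\Gamma_A})^{\Gamma_B}$, and a short index computation shows $J^{\Gamma_A} = J$ in that case. For the converse, suppose $J = J^{\Gamma_A}$. Writing $J$ in components with respect to the product basis, the condition $J = J^{\Gamma_A}$ swaps the two $A$-indices, and combined with Hermiticity $J_{(ai),(bj)} = \overline{J_{(bj),(ai)}}$ one deduces that all entries are real; I would carry this out by noting that $\Gamma_A$ and complex conjugation, together with Hermiticity, generate a constraint forcing $\operatorname{Im} J_{(ai),(bj)} = 0$ entrywise. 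Alternatively, and perhaps more in the spirit of the paper, I can cite Theorem~\ref{thm:RealChoi}: every physically consistent channel has a real Choi matrix, and conversely, so it suffices to check that "real Choi matrix" and "$J = J^{\Gamma_A}$" coincide, which again reduces to the entrywise computation above.

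The main obstacle, such as it is, is the bookkeeping in relating $J = J^{\Gamma_A}$ to reality of $J$: one must be careful about which tensor factor each transpose acts on and about the interplay with Hermiticity, since $J^{\Gamma_A}$ alone (without Hermiticity) is weaker than symmetry. I expect the cleanest presentation is: (i) establish $\cE(\rho)^T = \cE(\rho^T)\ \forall \rho \iff J = J^{\Gamma_A}$ via the partial-trace/orthogonality argument, reusing verbatim the opening lines of the proof of Theorem~\ref{thm:MRO}; (ii) establish $J = J^{\Gamma_A}$ (for Hermitian $J$) $\iff$ $J$ real, by a one-line entrywise argument using $J = J^\dagger$; (iii) invoke the equivalence (1)$\iff$(2) of Theorem~\ref{thm:RealChoi} to conclude that $J$ real $\iff \cE$ physically consistent. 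This keeps the whole argument to a few lines and leans entirely on machinery already developed.
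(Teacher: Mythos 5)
There is a genuine error in your reduction to the Choi matrix. Transposing $\cE(\rho)=\Tr_B\left[J(I\otimes\rho^T)\right]$ gives $\cE(\rho)^T=\Tr_B\left[J^{\Gamma_A}(I\otimes\rho^T)\right]$ --- the argument stays $\rho^T$, it does not become $\rho$ as you wrote. To compare this with $\cE(\rho^T)=\Tr_B\left[J(I\otimes\rho)\right]$ you need one more move: shift a $\Gamma_B$ from the factor $I\otimes\rho^T$ onto $J^{\Gamma_A}$ (partial transposition is self-adjoint under the partial trace), which yields $\cE(\rho)^T=\Tr_B\left[J^T(I\otimes\rho)\right]$. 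The orthogonality argument then gives that transposition covariance is equivalent to $J=J^T$, i.e.\ symmetry of $J$, and \emph{not} to $J=J^{\Gamma_A}$. These two conditions are genuinely different: for the identity channel on a qubit, $J=\sum_{jk}\dyad{j}{k}\otimes\dyad{j}{k}$ is real and symmetric (and the channel is trivially transposition covariant and physically consistent), yet $J^{\Gamma_A}$ is the swap operator, so $J\neq J^{\Gamma_A}$. Your step (ii) fails as well: $J=J^{\Gamma_A}$ together with Hermiticity does not force $J$ to be real --- e.g.\ $J=I_A\otimes\left(I+\tfrac{1}{2}\sigma_y\right)$ is Hermitian, positive, and invariant under $\Gamma_A$, but has imaginary entries. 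So neither link in your chain ``covariance $\iff J=J^{\Gamma_A}\iff J$ real'' holds, even though the two endpoints are in fact equivalent.

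The repair is exactly the route the paper takes: once you have $\cE(\rho)^T=\Tr_B\left[J^T(I\otimes\rho)\right]$, the condition $\cE(\rho)^T=\cE(\rho^T)$ for all $\rho\in\cB$ is equivalent to $\Tr_B\left[(J-J^T)(I\otimes\rho)\right]=0$ for all $\rho$, hence to $J=J^T$; and since $J$ is Hermitian, $J=J^T$ is the same as $J=\bar J$, i.e.\ $J$ real, which is condition (2) of Theorem~\ref{thm:RealChoi}. That last step is a one-liner and needs no entrywise bookkeeping with $\Gamma_A$ at all.
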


\begin{proof}
Suppose $\cE$ is physically consistent, then by theorem \ref{thm:RealChoi}, the Choi matrix $J$ of $\cE$ is real and so $J^T = J$. Then 
\begin{align*}
\cE(\rho)^T &= \Tr_B \left[ J^{\Gamma_A} (I \otimes \rho^T) \right] \\
&=  \Tr_B \left[ J^T (I \otimes \rho) \right] \\
&=  \Tr_B \left[ J (I \otimes \rho) \right] \\
&=  \cE(\rho^T)
\end{align*}
where the second equality comes from the fact that the partial transpose is self-adjoint under the trace. For the converse, note that $$\cE(\rho)^T = \Tr_B[J^{\Gamma_A}(I \otimes \rho^T) ] = \Tr_B[J^T(I \otimes \rho) ]$$ and so 
$$0= \cE(\rho^T)- \cE(\rho)^T = \Tr_B \left[ (J-J^T)(I \otimes \rho) \right]  \quad \forall \rho \in \cB,$$
so that $J=J^T$. 
\end{proof}

The definition of Transposition Covariant Operations is analogous to the set of Dephasing-Covariant Incoherent Operations (DIO) in coherence theory \cite{Chitambar2016}, as both the transposition and completely dephasing maps act as the identity on the set of real and incoherent states respectively. Motivated by the physical consistency of this versatile class of operations, we henceforth restrict our definition of free operations to those which are physically consistent, as characterized in theorem \ref{thm:RealChoi}.

\section{Measures of Imaginarity}

Next we discuss the quantification of imaginarity, which is necessary in determining the degree of resourcefulness of a given state. This will act as a first step in understanding the extent to which a resource may be used to simulate non-free operations. Such an understanding is of particular interest in the resource theory of imaginarity, as it will allow us to gauge how well real quantum mechanics approximates the standard theory.

\begin{definition} \label{defn:measure}
A \textit{measure of imaginarity} is a function $M: \cB \to [0,\infty)$ such that 
\begin{enumerate}
\item $M(\rho) = 0$ if $\rho \in \cF$
\item $M(\cE(\rho)) \leq M(\rho)$ whenever $\cE$ is a free operation. 
\end{enumerate}
\end{definition}
Furthermore, we say that a measure of Imaginarity $M$ is \textit{faithful} if $M(\rho) = 0 \implies \rho \in \cF$, and that $M$ is an \textit{imaginarity monotone} if it does not increase on average under free operations. Condition 1 requires that any real state will have no imaginarity. Condition 2 follows from the initial assumption that our free operations should not increase the resource content of a state. In general, any measure of a resource should be monotonically non-increasing under the physically free operations, but if the measure is monotonic under RNG, then the measure remains valid for any resource theory with the same set of free states \cite{Brandao2015}.

As discussed in \cite{Brandao2015}, there are many choices of valid measures for an arbitrary convex resource theory, which are often given by minimizing some quantity over a subset of linear operators. A couple of examples that will satisfy definition \ref{defn:measure} include:
\begin{enumerate}[(i)]
\item The robustness of imaginarity $$\mathcal R(\rho) = \min_{\pi \in \cB} \left\{ s \geq 0 : \frac{s \pi + \rho}{1+s}\in \cF \right\}.$$
\item A distance measure $$M(\rho) = \min_{\sigma \in \cF} \mathcal C(\rho,\sigma)$$ where $\mathcal C$ is any contractive metric.
\end{enumerate}

\begin{figure}[t]
\centering
	\begin{tikzpicture}
	\node[inner sep=0pt] at (0,0)
	    {\includegraphics[width=.4\textwidth]{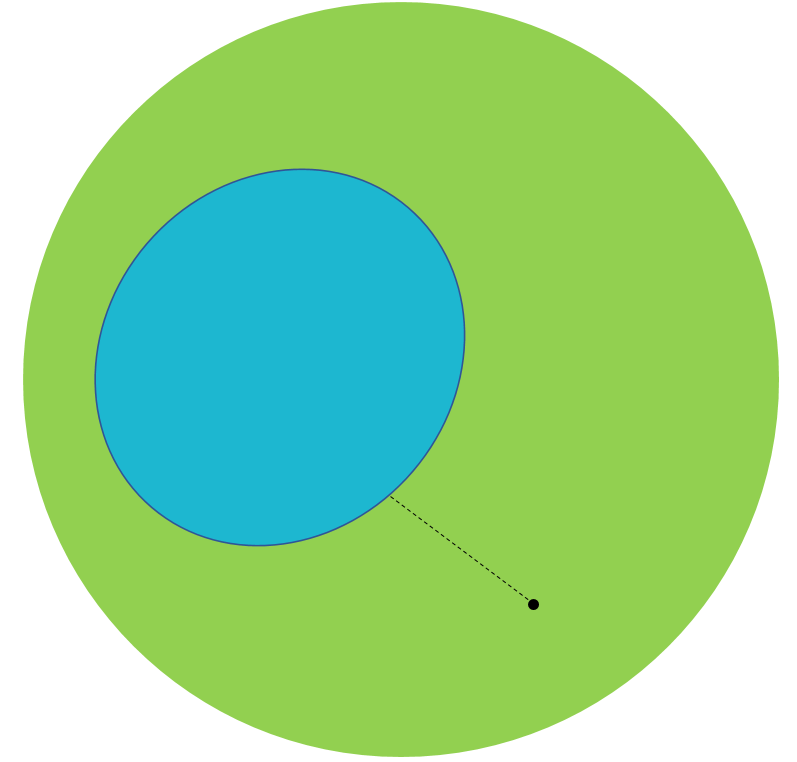}};
	\draw (-1.2,0.4) node[below] {\Huge $\cF$};
	\draw (1.9,1.2) node[below] {\Huge $\cB$};
	\draw (1.4,-1.9) node[below] {\large $\rho$};
	\draw (0.9,-1.1) node[below,rotate=-30] {\normalsize $M(\rho)$};
	\end{tikzpicture}
	\caption{The measure introduced in theorem \ref{thm:measure} is interpreted as the trace distance to the set of real states.}
	\label{fig:venn1}
\end{figure}

We will focus our attention to a special case of example (ii), where we consider the contractive metric to be the trace distance.

\begin{theorem} \label{thm:measure}
Let $\rho$ be a density matrix, then $$M(\rho) := \min_{\sigma \in F} \norm{\rho-\sigma}_1 = \frac 12 \norm{\rho-\rho^T}_1$$ is a faithful measure of imaginarity.
\end{theorem}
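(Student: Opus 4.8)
The plan is to prove the claimed identity $M(\rho)=\tfrac12\norm{\rho-\rho^T}_1$ first, since once the closed form is in hand both the ``measure'' axioms of Definition~\ref{defn:measure} and faithfulness follow quickly. So the three ingredients I would assemble are: (a)~the value of the minimization, (b)~the vanishing condition, and (c)~monotonicity under free operations.

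For (a) I would guess the minimizer and verify it. Decompose $\rho$ into its real and imaginary parts relative to the reference basis, $\rho=\tfrac12(\rho+\rho^T)+\tfrac12(\rho-\rho^T)$, and take $\sigma_0:=\tfrac12(\rho+\rho^T)$. The key sub-step is to check $\sigma_0\in\cF$: it is manifestly real and symmetric with unit trace, and it is positive semidefinite because for any $v=a+ib$ with $a,b$ real one has $v^\dagger\sigma_0 v=a^T\sigma_0 a+b^T\sigma_0 b\geq0$, the cross terms cancelling by symmetry of $\sigma_0$ and the diagonal terms being nonnegative since $\sigma_0$ inherits positivity of $\rho$ on real vectors. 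This yields the upper bound $M(\rho)\leq\norm{\rho-\sigma_0}_1=\tfrac12\norm{\rho-\rho^T}_1$. For the reverse inequality, for any $\sigma\in\cF$ use $\sigma=\sigma^T$ to write $\rho-\rho^T=(\rho-\sigma)-(\rho-\sigma)^T$, so by the triangle inequality and invariance of the trace norm under transposition, $\norm{\rho-\rho^T}_1\leq2\norm{\rho-\sigma}_1$; minimizing over $\sigma\in\cF$ gives $\tfrac12\norm{\rho-\rho^T}_1\leq M(\rho)$, hence equality, with $\sigma_0$ attaining it.

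Step (b) is then immediate: $M(\rho)=0\iff\rho=\rho^T\iff\rho\in\cF$, which simultaneously delivers condition~1 of Definition~\ref{defn:measure} and faithfulness. For (c), the monotonicity $M(\cE(\rho))\leq M(\rho)$ under a free operation $\cE$, I would invoke the transposition-covariance corollary proved above: a physically consistent channel satisfies $\cE(\rho)^T=\cE(\rho^T)$, so $\cE(\rho)-\cE(\rho)^T=\cE(\rho-\rho^T)$. Since $\rho-\rho^T=2i\,\mathrm{Im}(\rho)$ is $i$ times a Hermitian operator, the trace-norm contractivity of CPTP maps gives $\norm{\cE(\rho-\rho^T)}_1\leq\norm{\rho-\rho^T}_1$, whence $M(\cE(\rho))=\tfrac12\norm{\cE(\rho-\rho^T)}_1\leq\tfrac12\norm{\rho-\rho^T}_1=M(\rho)$.

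The only genuinely non-routine point --- the ``hard part'' --- is the positivity of $\sigma_0=\tfrac12(\rho+\rho^T)$, i.e.\ checking that the naive candidate minimizer really is a legitimate density operator; once that is settled, everything else is the triangle inequality together with the already-established transposition covariance of the free operations. One should also be mildly careful to apply the trace-norm contraction of CPTP maps to the non-Hermitian operator $\rho-\rho^T$, which is why I would phrase it via ``$i$ times Hermitian'' to reduce to the standard statement.
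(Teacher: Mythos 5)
Your proposal is correct, and its treatment of faithfulness and monotonicity is essentially the paper's own: the same decomposition $\rho=\rho_R+i\rho_I$ with $\rho_R=\tfrac12(\rho+\rho^T)$, the same verification that $\rho_R$ is a legitimate real density operator, and the same combination of trace-norm contractivity of CPTP maps with the transposition covariance $\cE(\rho)^T=\cE(\rho^T)$ of free operations. (Your positivity check for $\rho_R$ via real and imaginary parts of the vector is slightly more laborious than necessary --- one can just note $\bra{\psi}\rho^T\ket{\psi}=\overline{\bra{\bar\psi}\rho\ket{\bar\psi}}\geq 0$ since $\rho^T=\bar\rho$ is positive semidefinite --- and $\rho-\rho^T=2i\,\mathrm{Im}(\rho)$ is itself Hermitian, being $i$ times the anti-Hermitian real antisymmetric matrix $2\,\mathrm{Im}(\rho)$; but neither point affects the argument.) Where you genuinely add something is step (a): the paper only exhibits $\rho_R$ as a feasible point, giving the upper bound $\min_{\sigma\in\cF}\norm{\rho-\sigma}_1\leq\tfrac12\norm{\rho-\rho^T}_1$, and its opening sentence conflates ``the formula satisfies the measure axioms'' with ``$\rho_R$ attains the minimum,'' leaving the matching lower bound unproved. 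Your triangle-inequality argument --- writing $\rho-\rho^T=(\rho-\sigma)-(\rho-\sigma)^T$ for any $\sigma\in\cF$ and using transpose-invariance of the trace norm to get $\norm{\rho-\rho^T}_1\leq 2\norm{\rho-\sigma}_1$ --- is exactly the missing piece, and it is needed for the equality claimed in the theorem statement. So your write-up is not a different route so much as a completed version of the paper's.
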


\begin{proof}
We will show that the trace distance is minimized by choosing $\sigma = \frac 12 \left( \rho + \rho^T \right)$ by showing that $\frac 12\norm{\rho-\rho^T}_1$ satisfies Definition \ref{defn:measure}. It is clear that $M(\rho)=0$ if and only if $\rho=\rho^T$ since the trace norm is a norm on $\cL(\cH)$. To show the monotonicity condition, recall that the trace norm is contractive under CPTP maps, so $$\norm{\cE(\rho) - \cE(\sigma)}_1 \leq \norm{\rho - \sigma}_1$$ for any CPTP map $\cE$ and $\rho,\sigma \in \cB$. Since $\rho$ is Hermitian, we can decompose $\rho = \rho_R + i\rho_I$ where $\rho_R = \frac 12 (\rho+\rho^T)$ is real and symmetric and $\rho_I = \frac{1}{2i}(\rho - \rho^T)$ is real and antisymmetric. Next note that $$\Tr \rho_R = \Tr \rho = 1$$ and $$\bra{\psi}\rho_R \ket{\psi} = \frac 12 \bra{\psi}\rho \ket{\psi} + \frac 12 \bra{\psi}\rho^T \ket{\psi} \geq 0$$ for any $\ket{\psi} \in \cH$, therefore $\rho_R \in \cB \cap \cF$. We can therefore pick $\sigma = \rho_R$ to get $$\norm{i\cE(\rho_I)}_1 \leq \norm{i \rho_I}_1.$$ Finally, note that if $\cE$ is a free operation then $$\cE(\rho_I) = \frac{1}{2i} \left[ \cE(\rho)- \cE(\rho)^T \right] = \frac{1}{2i} \left[ \cE(\rho)- \cE(\rho^T) \right] $$ and so $\norm{\cE(\rho_I)}_1 = M(\cE(\rho))$. Therefore $$M(\cE(\rho)) \leq M(\rho)$$
\end{proof}

As an example, consider the case where $\rho$ is a qubit density matrix. Then $\rho$ can be expanded in the Pauli basis $$\rho = \frac 12 (I+x\sigma_x+y\sigma_y+z\sigma_z)$$ where $x,y,z \in \R$ are the components of the Bloch vector of $\rho$ \cite{Nielsen2002}. Then $\frac 12(\rho-\rho^T) = y \sigma_y$ and thus $$M(\rho) = \frac 12 \norm{\rho-\rho^T}_1 =  \norm{y \sigma_y}_1 = 2|y|= 2 |\Tr(\rho \sigma_y)|$$
since the eigenvalues of $\sigma_y$ are $\pm 1$. Therefore in the qubit case, the measure defined in theorem \ref{thm:measure} reduces to twice the $y-$component of the Bloch vector.

\section{State Transformations}

Next we will consider the problem of transforming resourceful states via free operations. That is, given two resourceful states $\rho$ and $\rho'$, is there a free operation $\cE \in \cO$ such that $\cE ( \rho )= \rho'$? This is an important question in any resource theory as it induces a hierarchy of resourcefulness to be used in information processing tasks. In particular, we will consider transformations between single copies of a pure state. To answer this question, it will be useful first to determine a standard way to represent the resource of a given state. By manipulating resourceful states with free unitary operations, one can ensure that the resource content does not change as the transformation is reversible and its inverse is also free. The following lemma presents a standardized form for any pure state.

\begin{lemma} \label{thm:std_form}
Let $\ket{\psi}$ be a pure state in a $d-$dimensional Hilbert space with reference basis $\{ \ket{j} \}_{j=0}^{d-1}$. Then there exists a free unitary $U$ such that $U|\psi\ra\in\spa\{|0\ra,|1\ra\}$.
\end{lemma}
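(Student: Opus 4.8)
The plan is to exploit the fact that, by Lemma~\ref{thm:NSCunitary}, a free unitary is (up to an irrelevant global phase) a real orthogonal matrix $Q$, and that such a $Q$ acts simultaneously but separately on the real and imaginary parts of a vector. Writing $|\psi\rangle = |\psi_R\rangle + i\,|\psi_I\rangle$ with $|\psi_R\rangle,|\psi_I\rangle\in\R^d$ (the real and imaginary parts), these two real vectors span a real subspace $V\subseteq\R^d$ with $\dim V\le 2$. It therefore suffices to produce a real orthogonal $Q$ with $Q(V)\subseteq\spa_{\R}\{|0\rangle,|1\rangle\}$, since $Q$ is real and hence commutes with the operation of taking real/imaginary parts.

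Concretely, I would first choose an orthonormal basis $\{|e_0\rangle,\dots,|e_{d-1}\rangle\}$ of $\R^d$ such that $\spa_{\R}\{|e_0\rangle,|e_1\rangle\}\supseteq V$; this is possible precisely because $\dim V\le 2$ (if $\dim V<2$ one simply completes the basis arbitrarily). Then I would define the real matrix $Q$ by $Q|e_j\rangle=|j\rangle$ for all $j$. Since $Q$ sends one orthonormal basis of $\R^d$ onto another, it is real orthogonal, so by Lemma~\ref{thm:NSCunitary} it is a free unitary. By construction $Q|\psi_R\rangle$ and $Q|\psi_I\rangle$ both lie in $\spa_{\R}\{|0\rangle,|1\rangle\}$, and therefore
\[
Q|\psi\rangle \;=\; Q|\psi_R\rangle + i\,Q|\psi_I\rangle \;\in\; \spa_{\C}\{|0\rangle,|1\rangle\},
\]
which is the claim with $U:=Q$.

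There is no real obstacle here; the only points requiring care are bookkeeping ones — keeping the real span and the complex span distinct, noting that it is exactly the realness of $Q$ that lets it act componentwise on $|\psi_R\rangle$ and $|\psi_I\rangle$, and disposing of the degenerate cases ($d\le 2$, or $|\psi_R\rangle,|\psi_I\rangle$ linearly dependent so that $\dim V\le 1$) where the statement is immediate. It is worth remarking that the global-phase freedom $e^{i\theta}$ permitted by Lemma~\ref{thm:NSCunitary} plays no role in this lemma, although it will be used in the subsequent analysis of single-shot pure-state transformations.
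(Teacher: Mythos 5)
Your proof is correct and follows essentially the same route as the paper: both decompose $|\psi\rangle$ into real and imaginary parts and use a real orthogonal (hence free, by Lemma~\ref{thm:NSCunitary}) matrix to rotate the at-most-two-dimensional real span of those parts into $\spa\{|0\rangle,|1\rangle\}$. The only cosmetic difference is that the paper builds the orthogonal matrix in two explicit stages (first sending $|\psi_R\rangle$ to $|0\rangle$, then rotating within the orthogonal complement), whereas you construct it in one step by extending an orthonormal basis of that span.
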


\begin{proof}
Decompose $|\psi\ra=a|\psi_R\ra+ib|\psi_I\ra$, with $|\psi_R\ra$ and $|\psi_I\ra$ being two real normalized vectors, and $a,b$ two positive real numbers satisfying $a^2+b^2=1$. Let $O$ be an orthogonal real matrix satisfying $O|\psi_R\ra=|0\ra$. It follows that $|\psi\ra$ is resource equivalent to $O|\psi\ra= a|0\ra+ib|\phi\ra$, where $|\phi\ra\equiv O|\psi_I\ra$ is some real normalized vector. Next, denote $|\phi\ra=\cos(\theta)|0\ra+\sin(\theta)|\chi\ra$, where $|\chi\ra$ is a real vector in the span of $\{|j\ra\}_{j=1}^{d-1}$, and
apply another orthogonal matrix  in the form $O'\equiv |0\lr 0|\oplus\tilde{O}$, where $\tilde{O}$ is an orthogonal matrix in the $d-1$-dimensional space spanned by $\{|j\ra\}_{j=1}^{d-1}$, satisfying $\tilde{O}|\chi\ra=|1\ra$.
We therefore get that
$$
O'O|\psi\ra= \left(a+ib\cos(\theta)\right)|0\ra+ib\sin(\theta)|1\ra\;.
$$
Hence, any state $|\psi\ra$ can be converted reversibly by free operations into a pure qubit state in the span of $\{|0\ra,|1\ra\}$.
\end{proof}

We can thus represent the resource of an arbitrary pure state with a vector in the two dimensional subspace spanned by the first two reference basis vectors. However, there is still more that we can do:
\begin{lemma}
Let $\rho$ be a qubit \emph{mixed} state. Then there exists a $2\times 2$ orthogonal matrix $O$ such that
$$ O\rho O^T=\begin{pmatrix}
1/2 & x-iy\\
x+iy & 1/2
\end{pmatrix}\;, $$
where $x$ and $y$  are non-negative real numbers such that $x^2+y^2\leq 1/4$.
\end{lemma}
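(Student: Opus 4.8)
The plan is to work in the Bloch representation. Write $\rho=\tfrac12\bigl(I+r_x\sigma_x+r_y\sigma_y+r_z\sigma_z\bigr)$ with $(r_x,r_y,r_z)\in\R^3$, and recall that positivity of the qubit density matrix $\rho$ is equivalent to $r_x^2+r_y^2+r_z^2\le 1$. Since conjugation by a real orthogonal $O$ sends a density matrix to a density matrix, it suffices to understand how $O(\cdot)O^T$ acts on the traceless part $r_x\sigma_x+r_y\sigma_y+r_z\sigma_z$. The key observations are: (a) $i\sigma_y=\bigl(\begin{smallmatrix}0&1\\-1&0\end{smallmatrix}\bigr)$ is the standard $2\times2$ symplectic form, so $O\sigma_yO^T=\det(O)\,\sigma_y$ for every $O\in O(2)$; and (b) $\sigma_x$ and $\sigma_z$ are precisely the real symmetric traceless Pauli matrices, and $O(\cdot)O^T$ permutes the real symmetric traceless matrices among themselves.

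First I would kill the $\sigma_z$ component. The matrix $S:=r_x\sigma_x+r_z\sigma_z$ is real, symmetric and traceless, hence has eigenvalues $\pm\sqrt{r_x^2+r_z^2}$; since $\sqrt{r_x^2+r_z^2}\,\sigma_x$ is another real symmetric matrix with the same spectrum, the spectral theorem gives $O_1\in O(2)$ with $O_1SO_1^T=\sqrt{r_x^2+r_z^2}\,\sigma_x$ (take $O_1=I$ if $S=0$). Applying $O_1$ and using $O_1\sigma_yO_1^T=\det(O_1)\sigma_y$ yields
$$O_1\rho O_1^T=\tfrac12 I+\tfrac12\sqrt{r_x^2+r_z^2}\,\sigma_x+\tfrac12\,\det(O_1)\,r_y\,\sigma_y .$$
Set $x:=\tfrac12\sqrt{r_x^2+r_z^2}\ge 0$. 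If $\det(O_1)\,r_y\ge 0$ we finish with $O=O_1$ and $y:=\tfrac12\det(O_1)\,r_y\ge 0$; otherwise conjugate once more by $\sigma_x$, which lies in $O(2)$ with $\det=-1$ and satisfies $\sigma_x\sigma_x\sigma_x=\sigma_x$ and $\sigma_x\sigma_y\sigma_x=-\sigma_y$, so that $O:=\sigma_xO_1$ gives $O\rho O^T=\tfrac12 I+x\sigma_x-\tfrac12\det(O_1)r_y\,\sigma_y$ with $y:=-\tfrac12\det(O_1)r_y>0$. In either case $O\rho O^T=\tfrac12 I+x\sigma_x+y\sigma_y=\bigl(\begin{smallmatrix}1/2 & x-iy\\ x+iy & 1/2\end{smallmatrix}\bigr)$ with $x,y\ge 0$.

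For the norm bound I would note that orthogonal conjugation preserves the determinant, so $\det(O\rho O^T)=\det\rho\ge 0$; computing the left-hand side directly gives $\tfrac14-(x^2+y^2)\ge 0$, i.e. $x^2+y^2\le\tfrac14$ (equivalently $x^2+y^2=\tfrac14(r_x^2+r_y^2+r_z^2)\le\tfrac14$). I do not anticipate a genuine obstacle: the only delicate point is keeping track of the sign of the $\sigma_y$ component under the $\det=-1$ elements of $O(2)$, and that is exactly what the auxiliary conjugation by $\sigma_x$ resolves.
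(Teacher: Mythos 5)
Your proof is correct, and it is organized differently from the paper's. The paper works directly with the matrix entries: it writes $\rho=\bigl(\begin{smallmatrix}a & c\\ \bar c & 1-a\end{smallmatrix}\bigr)$ and exhibits an explicit rotation by an angle $\alpha$ with $\tan(2\alpha)=(2a-1)/(c+\bar c)$, then simply asserts that the diagonal of $O\rho O^T$ becomes $1/2$; it does not explicitly address why one may take $x,y\ge 0$ or why $x^2+y^2\le 1/4$. You instead work in the Bloch/Pauli picture: you isolate the antisymmetric part via the identity $O\sigma_y O^T=\det(O)\,\sigma_y$, dispose of the real symmetric traceless part $r_x\sigma_x+r_z\sigma_z$ with the spectral theorem (which, incidentally, sidesteps the degenerate case $c+\bar c=0$ in the paper's angle formula), fix the signs with an extra reflection by $\sigma_x$, and get the bound $x^2+y^2\le 1/4$ from $\det\rho\ge 0$. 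The two arguments rest on the same underlying fact --- a planar rotation can equalize the diagonal while leaving the $\sigma_y$ component invariant up to sign --- but yours is more structural and is the more complete of the two, since it actually verifies the non-negativity of $x,y$ and the norm bound that the paper's one-line proof leaves to the reader.
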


\begin{proof}
Denote $$\rho=\begin{pmatrix}
a & c\\
\bar{c} & 1-a
\end{pmatrix}
$$ with $0\leq a\leq 1$ and $c$ a complex number with $|c|^2\leq a(1-a)$. Take 
$$
O=\begin{pmatrix}
\cos\alpha & \sin\alpha\\
-\sin\alpha & \cos\alpha
\end{pmatrix} 
\quad\text{
with} \quad\tan(2\alpha)=\frac{2a-1}{c+\bar{c}}\;.$$ Then, it is straight forward to check that $O\rho O^T$ has 1/2 along the diagonal.
\end{proof}

Note that if $\rho$ is a pure state then $x^2+y^2=1/4$. We therefore conclude that any $d$-dimensional pure state $|\psi\ra$ can be converted reversibly by free operations to a state of the form
$$ |\theta\ra\equiv\frac{1}{\sqrt{2}}\left(|0\ra+e^{i\theta}|1\ra\right)\quad\text{where}\quad x+iy=\frac{1}{2}e^{i\theta}\;. $$
Note that since $x,y\geq 0$ we can assume w.l.o.g. that $0\leq \theta\leq\frac{\pi}{2}$.
Hence, the resourcefulness of pure states is determined completely by one parameter $\theta$. Moreover, note that
we have
$$
M(|\psi\lr\psi|)=M(|\theta\lr\theta|)=2\sin(\theta)\;.
$$

\begin{theorem} \label{thm:PROtransform}
Let $|\psi\ra$ and $|\phi\ra$ be two pure states in a $d$-dimensional Hilbert space $\mH$. Then there exists a free operation $\cE$ such that $\cE(\dyad{\psi})=\dyad{\phi}$ if and only if 
$$
M(|\psi\lr\psi|)\geq M(|\phi\lr\phi|)\;.
$$
\end{theorem}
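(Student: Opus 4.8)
The plan is to prove the two implications separately. For the ``only if'' direction there is nothing to do: $M$ is a measure of imaginarity by Theorem~\ref{thm:measure}, hence non-increasing under free operations, so $\cE(\dyad{\psi})=\dyad{\phi}$ with $\cE\in\cO$ gives $M(\dyad{\phi})=M(\cE(\dyad{\psi}))\le M(\dyad{\psi})$. For the ``if'' direction I would first reduce to the canonical form: by Lemma~\ref{thm:std_form} and the discussion following it there are free unitaries $W_\psi,W_\phi$ with $W_\psi\ket{\psi}=\ket{\theta_1}$ and $W_\phi\ket{\phi}=\ket{\theta_2}$, where $\ket{\theta_j}=\tfrac{1}{\sqrt{2}}\big(\ket{0}+e^{i\theta_j}\ket{1}\big)$, $\theta_j\in[0,\tfrac{\pi}{2}]$, and $M(\dyad{\theta_j})=2\sin\theta_j$. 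Since the inverse of a free unitary is again free and free operations compose, it is enough to build a free operation sending $\dyad{\theta_1}$ to $\dyad{\theta_2}$; and because $\sin$ is increasing on $[0,\tfrac{\pi}{2}]$, the hypothesis now reads $\theta_1\ge\theta_2$.

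I would construct this channel as a free dilation in the sense of Definition~\ref{defn:PRO}: it suffices to produce an ancilla Hilbert space $\mH_E$, a unit vector $\ket{\eta}\in\mH_E$, and a real orthogonal matrix $U_{AE}$ on $\mH\otimes\mH_E$ (a free unitary by Lemma~\ref{thm:NSCunitary}) with $U_{AE}\big(\ket{\theta_1}\otimes\ket{0}_E\big)=\ket{\theta_2}\otimes\ket{\eta}$, for then $\cE(\dyad{\theta_1})=\Tr_E\big[U_{AE}(\dyad{\theta_1}\otimes\dyad{0}_E)U_{AE}^\dagger\big]=\dyad{\theta_2}$ since $\ket{\eta}$ is normalized. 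The existence of such a $U_{AE}$ rests on the elementary observation that a real orthogonal matrix, acting $\C$-linearly on a complex vector, preserves not only $\langle v|v\rangle$ but also the bilinear quantity $\langle\bar{v}|v\rangle=\sum_k v_k^2$ (the sum of the squared components, with no conjugation), and that these two numbers are a \emph{complete} invariant: two vectors in a space of dimension at least two lie in the same real-orthogonal orbit if and only if they agree on both. Indeed, writing $\ket{v}=\ket{a}+i\ket{b}$ with $\ket{a},\ket{b}$ real, the pair $\big(\langle v|v\rangle,\langle\bar{v}|v\rangle\big)$ determines exactly $\|a\|^2$, $\|b\|^2$ and $\langle a|b\rangle$, i.e.\ the Gram matrix of $\{\ket{a},\ket{b}\}$, and two real tuples with equal Gram matrices are related by an orthogonal transformation (a two-dimensional ancilla leaves more than enough room for this).

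Applying the observation to $\ket{v}=\ket{\theta_1}\otimes\ket{0}_E$ and $\ket{v'}=\ket{\theta_2}\otimes\ket{\eta}$: both are unit vectors, and using $\langle\bar{\theta_j}|\theta_j\rangle=\tfrac{1}{2}\big(1+e^{2i\theta_j}\big)=e^{i\theta_j}\cos\theta_j$, the requirement $\langle\bar{v}|v\rangle=\langle\bar{v'}|v'\rangle$ becomes $\langle\bar{\eta}|\eta\rangle=\frac{1+e^{2i\theta_1}}{1+e^{2i\theta_2}}=e^{i(\theta_1-\theta_2)}\frac{\cos\theta_1}{\cos\theta_2}$, the degenerate case $\theta_2=\tfrac{\pi}{2}$ (which forces $\theta_1=\theta_2$) being handled by the identity channel. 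A unit vector $\ket{\eta}$ with a prescribed value of $\langle\bar{\eta}|\eta\rangle$ exists precisely when that value lies in the closed unit disc, since $|\langle\bar{\eta}|\eta\rangle|\le\langle\eta|\eta\rangle=1$ and every point of the disc is attained, e.g.\ by $e^{i\beta}\big(\cos t\,\ket{0}+i\sin t\,\ket{1}\big)$, for which $\langle\bar{\eta}|\eta\rangle=e^{2i\beta}\cos 2t$. In our situation the modulus of the target value is $\cos\theta_1/\cos\theta_2$, which is at most $1$ exactly when $\cos\theta_1\le\cos\theta_2$, i.e.\ exactly when $\theta_1\ge\theta_2$. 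Hence $\ket{\eta}$, then $U_{AE}$, then $\cE$, exist; composing $\cE$ with conjugation by $W_\psi$ and by $W_\phi^{-1}$ (which are free operations) produces a free operation taking $\dyad{\psi}$ to $\dyad{\phi}$, as required.

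The step I expect to require the most care in a clean write-up is the ``complete invariant'' lemma --- verifying that $\langle v|v\rangle$ and $\langle\bar{v}|v\rangle$ pin down the Gram matrix of the real and imaginary parts of $\ket{v}$, and checking the dimension count so that the required orthogonal matrix actually exists --- together with the short identity $\tfrac{1}{2}(1+e^{2i\theta})=e^{i\theta}\cos\theta$ and the equivalence $\cos\theta_1/\cos\theta_2\le 1\iff\theta_1\ge\theta_2$ on $[0,\tfrac{\pi}{2}]$. Everything else is routine bookkeeping with results already established in the paper.
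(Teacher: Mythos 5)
Your proof is correct, but it reaches the sufficiency direction by a genuinely different route than the paper. After the same reduction to the canonical pair $\ket{\theta_1}\to\ket{\theta_2}$ with $\theta_1\ge\theta_2$, the paper works in the Bloch-sphere picture: it writes down an explicit affine map $\vec{\mathbf{r}}\mapsto T\vec{\mathbf{r}}+\vec{\mathbf{t}}$ with $T=\mathrm{diag}\big(1,\tfrac{\sin\theta_2}{\sin\theta_1},0\big)$ and $\vec{\mathbf{t}}=(\cos\theta_2-\cos\theta_1,0,0)^T$, and then asserts (leaving the computation to the reader) that the resulting Choi matrix is real and positive semi-definite. You instead construct the channel directly as a free dilation: a real orthogonal $U_{AE}$ with $U_{AE}\big(\ket{\theta_1}\otimes\ket{0}_E\big)=\ket{\theta_2}\otimes\ket{\eta}$, whose existence you reduce to the observation that the pair $\big(\langle v|v\rangle,\langle\bar v|v\rangle\big)$ --- equivalently the Gram matrix of the real and imaginary parts of $\ket{v}$ --- is a complete invariant for real-orthogonal orbits, and then to solving $\langle\bar\eta|\eta\rangle=e^{i(\theta_1-\theta_2)}\cos\theta_1/\cos\theta_2$ inside the closed unit disc. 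Your computations check out (including the degenerate case $\theta_2=\pi/2$ and the equivalence $\cos\theta_1/\cos\theta_2\le 1\iff\theta_1\ge\theta_2$). What your approach buys: positivity and trace preservation are automatic (the map is manifestly a Stinespring dilation with a free unitary and free ancilla, hence free by Theorem~\ref{thm:RealChoi}), so nothing is left ``straightforward to check''; moreover the orbit-invariant argument makes it transparent \emph{why} the transformation is impossible when $\theta_1<\theta_2$, since the required $\ket{\eta}$ would have to violate $|\langle\bar\eta|\eta\rangle|\le 1$. What the paper's approach buys is an explicit qubit channel in closed form on the Bloch ball, which is convenient for seeing the action on arbitrary (mixed) inputs. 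Both proofs share the same minor bookkeeping of extending the qubit construction to the full $d$-dimensional space via the free unitaries of Lemma~\ref{thm:std_form}.
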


\begin{proof}
The necessity of this condition follows from the fact that the measure $M$ is a monotone.  It is therefore left to show that if $M(|\psi\lr\psi|)\geq M(|\phi\lr\phi|)$ then there exists a free operation $\mE$ such that $\cE(\dyad{\psi})=\dyad{\phi}$. 

Recall that every qubit state $\rho$ can be expanded in the Pauli basis $$\rho = \frac 12(I + \vec{\mathbf{r}} \cdot \vec{\mathbf{\sigma}})$$ where $\vec{\mathbf{r}}: \, \norm{\vec{\mathbf{r}}} \leq 1$ is a real 3-component vector called the Bloch vector of $\rho$ and $\vec{\mathbf{\sigma}} = (\mathbf{X} \, \mathbf{Y} \, \mathbf{Z})^T$ is a vector of Pauli matrices. Next, every quantum operation $\cE$ on a qubit has a geometric interpretation as an affine transformation which maps the Bloch sphere onto itself \cite{Pasieka2009}. The action of the operation $\cE$ on $\rho$ will thus produce a new state $\rho'$ with corresponding Bloch vector given by 
\be \label{eqn:bloch_transform}
\vec{\mathbf{r}}' = T \vec{\mathbf{r}} + \vec{\mathbf{t}}
\ee
where $\vec{\mathbf{t}}$ is a 3-component real vector and $T$ is a real 3x3 matrix. In our case, w.l.o.g. we assume that the input state $\rho$ is the pure state $|\psi\ra=|\theta\ra$ and the output state $|\phi\ra$ is the pure state $|\theta'\ra$ with $0\leq \theta'\leq\theta\leq 1$. Consequently, we have
$$\vec{\mathbf{r}}=\begin{pmatrix}
\cos(\theta)\\
\sin(\theta)\\
0
\end{pmatrix}\quad
\text{and}\quad\vec{\mathbf{r}}'=\begin{pmatrix}
\cos(\theta')\\
\sin(\theta')\\
0
\end{pmatrix}$$
Clearly, Eq.~\eqref{eqn:bloch_transform} holds for
$$
T= \begin{pmatrix}
1 & 0 & 0 \\
0 & \frac{\sin(\theta')}{\sin(\theta)} & 0 \\
0 & 0 & 0 
\end{pmatrix}\quad\text{and}\quad\vec{\mathbf{t}} = \begin{pmatrix}
\cos(\theta')-\cos(\theta) \\
0 \\
0
\end{pmatrix} 
$$
Moreover, note that $\frac{\sin(\theta')}{\sin(\theta)}\leq 1$ and $\cos(\theta')-\cos(\theta)\geq 0$. It is straight forward to check that these choices lead to a Choi matrix that is real and positive semi-definite.
\end{proof}

It follows from the theorem above, that the state $\ket{+ i}:= \frac{1}{\sqrt 2}\left( \ket{0} + i \ket{1}\right)$,  can be transformed to all other $d$-dimensional pure states via free operations. This, in turn, implies that $\ket{+ i}$ can also be converted to any mixed state, since $\ket{+ i}$ can be converted to any state $|\phi_j\ra$ with probability $p_j$, achieving the transformation $|\psi\ra\to\rho\equiv\sum_jp_j|\phi_j\lr\phi_j|$.
This means that the resource theory of imaginarity has a unique maximally resourceful state, $\ket{+ i}$, that can be converted to any other state, which we call the \textit{maximally imaginary} state. Note that $\ket{-i}$ and $\ket{+i}$ are related by a free orthogonal transformation and therefore do not correspond to two distinct maximally imaginary states. 

The characterization of the transformations attainable by physically consistent free operations in Theorem \ref{thm:PROtransform} allows one to achieve the same tasks as standard quantum mechanics with the maximally imaginary state. For example, one may use the $\ket{+i}$ state to generate the product state $\ket{+i}\ket{+i}$ and subsequently transform a single copy of the maximally imaginary state to any other state deterministically.

The fact that the existence of a process which takes one pure state to another requires the comparison of only a single parameter is not too surprising, as Lemma \ref{thm:std_form} shows that we can equivalently represent the resourcefulness of any pure state by a qubit. This is in contrast to many other resource theories, for example, in entanglement theory with LOCC, one must evaluate $d-1$ inequalities to determine whether or not a free channel exists between two pure states \cite{Nielsen2002}. It is worth noting that a similar characterization exists in the resource theory of the $\mathbb Z_2$-superselection rule \cite{Gour2008}, where one can reversibly map any pure state to a two-dimensional subspace and the condition for the existence of a state transformation also depends only on a single parameter.

\section{Conclusion}
We introduced a framework for the resource theory of imaginarity, which was first mentioned in \cite{Gour2017} where it was shown to be the only known affine resource theory that does not have a self-adjoint resource destroying map. Upon defining real states, we completely characterized two sets of real operations; those which are resource non-generating and those which we call physically consistent. We showed that not only do the physically consistent free operations prevent resources from being generated probabilistically, but they coincide with all completely resource non-generating operations as well as admit a free dilation. This provides a level of consistency which is not present in coherence theory \cite{Chitambar2016}, and further suggests that imaginarity is a physically significant property of a quantum system. We also introduced several measures of imaginarity and provided a closed form measure in terms of the trace distance. Furthermore, Theorem \ref{thm:PROtransform} provides necessary and sufficient conditions for the the existence of a physically consistent channel between two pure states, and shows that there is an equivalence class of states in any dimension from which any pure state can be generated.

The results presented in this paper may act as a starting point for the analysis of other imaginarity measures, mixed state transformations, transformations in the asymptotic limit of many copies of states and the possible existence of catalysts \cite{Jonathan1999,Aberg2014,Duarte2016} under the restriction to free operations. An interesting direction for future work would be to investigate deviations from the complex numbers, in a resource theoretic framework, with reference to number systems other than the reals. This could be a restriction to some smaller field such as the rationals, or an extension of the complex numbers such as the quaternions. The next steps in this theory could include extensions to infinite dimensional Hilbert spaces as well as analyzing the connection of this theory to entanglement theory \cite{Killoran2016} and other convex resource theories \cite{Gour2017}.

{\begin{acknowledgements}

We thank Mark Girard and Nuiok Dicaire for useful discussions and comments. A. Hickey is supported by NSERC under an Undergraduate Student Research Award. GG acknowledges support from NSERC.
\end{acknowledgements}}

\end{document}